\documentclass[11pt,letterpaper]{article}
\usepackage[margin=1in]{geometry}  

\usepackage{epsfig}
\usepackage{amsmath,amsfonts,amssymb,color}
\usepackage{epsfig}
\usepackage{psfrag}
\usepackage{algorithm}
\usepackage{algpseudocode}
\usepackage{array}
\usepackage{epstopdf}
\usepackage{tikz, subcaption, cite}
\usepackage{float}
\usepackage{mathtools}
\usepackage{caption}
\usepackage{amsthm}
\usepackage{empheq}

\newtheorem{theorem}{Theorem}

\newtheorem{corollary}{Corollary}
\newtheorem{remark}{Remark}

\newtheorem{example}{Example}

\allowdisplaybreaks

\makeatletter
\newenvironment{breakablealgorithm}
  {
   \begin{center}
     \refstepcounter{algorithm}
     \hrule height.8pt depth0pt \kern2pt
     \renewcommand{\caption}[2][\relax]{
       {\raggedright\textbf{\ALG@name~\thealgorithm} ##2\par}%
       \ifx\relax##1\relax 
         \addcontentsline{loa}{algorithm}{\protect\numberline{\thealgorithm}##2}%
       \else 
         \addcontentsline{loa}{algorithm}{\protect\numberline{\thealgorithm}##1}%
       \fi
       \kern2pt\hrule\kern2pt
     }
  }{
     \kern2pt\hrule\relax
   \end{center}
  }
\makeatother
\algnewcommand{\LeftComment}[1]{\Statex \(\triangleright\) #1}

\title{\LARGE \bf Min-Max Tours for Task Allocation to Heterogeneous Agents}
\author{Amritha~Prasad$^*$,~Han-Lim~Choi$^\dagger$~and~Shreyas~Sundaram$^*$
\thanks{$^*$Amritha Prasad and Shreyas Sundaram are with the School of Electrical and Computer Engineering at Purdue University. Email: {\tt \{prasad20,sundara2\}@purdue.edu}\newline
$^\dagger$Han-Lim Choi is with the Department of Aerospace Engineering at Korea Advanced Institute of Science and Technology. Email: {\tt hanlimc@kaist.ac.kr}}%
}
\date{}

\begin{document}
\maketitle
\thispagestyle{empty}
\pagestyle{empty}

\begin{abstract}
We consider a scenario consisting of a set of heterogeneous mobile agents located at a depot, and a set of tasks dispersed over a geographic area. The agents are partitioned into different types. The tasks are partitioned into specialized tasks that can only be done by agents of a certain type, and generic tasks that can be done by any agent. The distances between each pair of tasks are specified, and satisfy the triangle inequality. Given this scenario, we address the problem of allocating these tasks among the available agents (subject to type compatibility constraints) while minimizing the maximum cost to tour the allocation by any agent and return to the depot. This problem is NP-hard, and we give a three phase algorithm to solve this problem that provides 5-factor approximation, regardless of the total number of agents and the number of agents of each type. We also show that in the special case where there is only one agent of each type, the algorithm has an approximation factor of 4. 
\end{abstract}

\section{Introduction}
Multi-robot systems will play a large role in a variety of modern and future applications including exploration, surveillance, search and rescue operations, cooperative control and operations in hazardous environments. In order to effectively utilize these multi-robot systems in such applications, it is necessary to allocate an appropriate set of tasks to each robot or agent in the system.  Such problems have been widely considered in the literature \cite{sundar2015exact, choi2009consensus, bertuccelli2009real, sadeghi2017heterogeneous, bullo2011dynamic}, most typically for the case where all agents are the same.  However, future multi-robot systems are also projected to have a large amount of diversity in terms of the capabilities of the agents, and the applications will consist of tasks that can only be done by agents that possess certain capabilities \cite{prorok2016adaptive, jones2006dynamically, labella2006division, kilgore2007mission, navalres2008auto}. We address this problem in this paper, namely allocating tasks efficiently to heterogeneous agents while meeting task-agent compatibility constraints.

Gerkey and Mataric \cite{gerkey2004formal} gives a taxonomy of task allocation in multi-robot systems. Prorok et. al. \cite{prorok2016adaptive} distributes a swarm of heterogeneous robots (of different types, with each type having different traits) among a set of tasks that require specialized capabilities in order to be completed. They optimize the transition rates for each type of robot so that the desired trait distribution is reached, but do not consider travel time between tasks. In this paper, we focus on task allocation to heterogeneous robots (of different types, with different capabilities) such that the task-agent compatibility constraints are satisfied and the maximum tour cost incurred by any agent is minimized. The existing literature has several works on task allocation to agents where the total tour cost (i.e., sum of tour costs incurred by all agents) is minimized. Bektas \cite{Bektas2006209} gives an overview on the work on the Traveling Salesperson Problem (TSP) with multiple (homogeneous) salespersons, where the sum of the cost of tours by all salespersons is minimized. Frieze \cite{frieze1983extension} extends Christofides' $\frac{3}{2}$-approximation algorithm \cite{christofides1976worst} for the single Traveling Salesperson Problem to the $k$-person Traveling Salesperson Problem (where all salespersons are identical and start at the same node). Xu and Rodrigues \cite{xu20153} provide a $\frac{3}{2}$-approximation algorithm for multiple TSP with a fixed number of depots. Malik, Rathinam and Darbha \cite{malik2007approximation} provide a $2$-approximation algorithm for a generalized multiple depot, multiple TSP with symmetric travel costs. Yadlapalli and Rathinam \cite{yadlapalli20123} provide a $3$-approximation algorithm for a case where two heterogeneous vehicles with associated travel costs start from distinct initial locations and are jointly required to visit a set of targets. Bae and Rathinam \cite{bae2016primal} provide a $2$-approximation algorithm for a special case of the above problem where the travel cost for one vehicle is at most that of the other.

As opposed to the above works that minimize the sum of the costs of each agent, the following works focus on minimizing the maximum tour cost incurred by any agent in a group of homogeneous agents. Frederickson, Hecht and Kim \cite{frederickson1976approximation} give tour splitting heuristics for the $k$-person variants of the Traveling Salesperson Problem. Given a tour constructed (using an $F$-approximation algorithm) on a set of nodes, they provide a $1 + F - \frac{1}{k}$ factor algorithm that outputs $k$ subtours such that the cost of the largest subtour is minimized. Yu and Liu \cite{yu2015improved} consider networks with multiple depots (or start nodes) and provide a $6$-approximation factor algorithm. Even et. al. \cite{even2004min} and Arkin et. al. \cite{arkin2006approximations} provide a $4$-approximation algorithm for a problem known as the min-max tree cover problem. This was improved by Khani and Salavatipour \cite{khani2014improved} who proposed a $3$-approximation algorithm. The recent work by Sadeghi and Smith \cite{sadeghi2017heterogeneous} give a decentralized auction based multi-robot task allocation where the objective is to minimize the total time taken to perform all tasks. They consider a case of heterogeneous robots, where the heterogeneity is captured by constraints on the motion of the robots. 

The above works focus either on minimizing the total cost incurred by all agents or on minimizing the maximum cost incurred by any agent in a set of agents, where all agents have the same functionality. In this paper, we combine the idea of heterogeneity in agent functionality with that of minimizing the maximum cost incurred by any agent. Specifically, consider a scenario where a set of tasks at different locations need to be executed; however, not all tasks can be done by all agents and certain task-agent compatibility constraints must be satisfied. Agents are partitioned into different types based on functionality of the agents. Tasks are partitioned into sets of type-specific tasks and generic tasks, where type-specific tasks can only be performed by agents of a given type and generic tasks can be performed by any agent.
 
To capture this scenario, we present the Heterogeneous Task Allocation Problem (HTAP) which aims to allocate a set of tasks among heterogeneous agents such that the maximum time to complete tasks by any agent is minimized. This is an important metric, especially when the tasks to be performed are time critical or when the quality of service is characterized by maximum delays. We find tours for each agents that start and end at a start node (e.g., home base), such that the task-agent compatibility constraints are met, and approximately minimize the maximum cost incurred by any agent to complete its tour.

The rest of this paper is organized as follows. In the next section, we state our problem formulation. We then present a naive algorithm and show that the approximation factor of this algorithm, although bounded, increases linearly with the number of agents. This motivates the need to develop better algorithms that perform well as the number of agents increase. We present two such algorithms in the subsequent section. We first develop an algorithm called \textsc{CycleSplit}. We leverage insights learned from \textsc{CycleSplit} and use it as a baseline to develop an alternate algorithm which we call \textsc{HeteroMinMaxSplit}. We show that these are 5-approximation algorithms for HTAP. We also show that in the special case where each of the heterogeneous agents are distinct, the proposed algorithms haves an approximation factor of 4.

\section{Problem Statement}
Consider a set of tasks $T$ that are required to be completed by a set of $k$ heterogeneous agents $A = \{A_1$, $A_2$,..., $A_k\}$. Each agent is one of $m$ {\it types}. Let $f:\{1,\ldots,k\}\to \{1,\ldots,m\}$ be a function that takes an agent number as input and outputs the type of that agent. For each $i \in \{1,2,\ldots,m\}$, let $m_i$ be the number of agents of type $i$, with $\sum_{i=1}^{m} m_i = k$. Let $T$ be composed of two broad classes of tasks: type-specific tasks and generic tasks. Type-specific tasks can be performed only by a specific type of agent, whereas generic tasks can be performed by any agent. Let $T_0$ denote the set of generic tasks and $T_i,~1\leq i \leq m$, denote the set of type-specific tasks that can be performed by agents of type $i$. Thus, $T = T_0~\cup~(\cup_{i=1}^{m}T_i)$. Let all agents start at a node $v_s$, called the start node (or home base). 

Consider a complete graph $G(V,E)$ with vertex set $V = T \cup \{v_s\}$ and edge set $E = \{(u,v):u,v \in V,~u \neq v\}$. Let each edge $e = (u,v) \in E$ have a weight $d(u,v)$ given by the distance between the nodes $u$ and $v$. Let the direct travel cost between two nodes be the weight of the edge connecting the two nodes in $G$. We assume the distances satisfy the triangle inequality. The cost of executing a task is assumed to be very small compared to travel costs and is hence neglected. A tour on a subset of nodes $V'\subseteq V$ is a closed path from $v_s$ through all nodes in $V'$ (in some order) ending at $v_s$. All tours start and end at the start vertex $v_s$. The cost of the tour is defined as the sum of weights of all edges on that tour. Let $C^*(V')$ denote the tour cost of the optimal tour on set $V'$.

The objective of the allocation problem is to partition the set of tasks $T$ among the agents subject to the task-agent compatibility constraints, such that the maximum cost among all agents to tour their allocated tasks is minimized. This is framed as the Heterogeneous Task Allocation Problem given below.

\vspace{10pt}
\textbf{Heterogeneous Task Allocation Problem (HTAP):}
\begin{equation*}
\begin{aligned}
&\min\limits_{S_1,S_2,\ldots,S_k \subseteq T}~\max\limits_{1 \leq j \leq k}~C^*(S_j) \\
&\text{subject to} \quad \cup_{j=1}^{k} S_j = T,\\
&\hspace{52pt} S_j = V_j \cup R_j,~\forall j \in \{1,2,\ldots,k\},\\
&\hspace{52pt} V_j \subseteq T_{f(j)},~R_j \subseteq T_0.
\end{aligned}
\label{HTAP}
\end{equation*}

In the above formulation, $S_j$ is the task set allocated to agent $A_j$, for $1 \leq j \leq k$. The first constraint implies that each task must be executed by some agent. The remaining conditions state that the task set allocated to each agent $A_j$ is a union of type-specific tasks $V_j$ (which is a subset of $T_{f(j)}$, where $f(j)$ is the type of agent $A_j$) and generic tasks $R_j$ (which is a subset of generic tasks $T_0$).

\begin{remark}
	Although we do not explicitly specify the condition that a task cannot be performed by more than one agent, minimization of the objective function implicitly ensures that the optimal solution does not have any node appearing in multiple subsets due to the triangle inequality.
	\label{remark1}
\end{remark}

Any instance of the Traveling Salesperson Problem (TSP) can be trivially reduced to an instance of the Heterogeneous Task Allocation Problem (HTAP) by setting the number of agents $k=1$ and all tasks to be generic. Thus, the HTAP is trivially NP-Hard, and has no polynomial time solution unless P = NP. Hence, in the rest of this paper we develop approximation algorithms for HTAP. An $\alpha$-approximation algorithm for a problem is an algorithm that efficiently finds a solution within a factor $\alpha$ of the optimum solution for every instance of that problem.

\section{A Naive Approximation Algorithm for the Heterogeneous Task Allocation Problem}
We start with the following simple algorithm to solve HTAP. In this naive algorithm, we allocate all tasks in $T$ to a single agent $A_1$. To every other agent, we allocate the type-specific tasks associated with that agent. Agent $A_1$ computes a tour using some approximation algorithm (e.g., Christofides' algorithm) on set $T$ (which includes type-specific tasks of all agents and all the generic tasks). Each other agent $A_j, ~j \neq 1$, computes a tour on its type-specific tasks using some approximation algorithm (e.g., Christofides' algorithm). Algorithm \ref{alg:NaiveAlg} describes this naive allocation.

\begin{breakablealgorithm}
	\caption{\textsc{NaiveAllocation} Algorithm}
	\label{alg:NaiveAlg}
	\begin{algorithmic}[1]
		\Procedure{NaiveAllocation}{$A,T,G$}
		\State \parbox[t]{\dimexpr 433pt}{Allocate type-specific tasks to each agent of that type.\strut}
		\State \parbox[t]{\dimexpr 433pt}{Allocate all generic tasks to agent $A_1$.\strut}
		\State \parbox[t]{\dimexpr 433pt}{Compute tour (starting and ending at node $v_s$) for the set of tasks allocated to each agent using an approximation algorithm.\strut}
		\State \parbox[t]{\dimexpr 433pt}{Return tours for each agent.\strut}
		\EndProcedure
	\end{algorithmic}
\end{breakablealgorithm}

We provide the approximation factor of this algorithm in the following theorem.

\begin{theorem}
	Suppose the algorithm that is used to compute the tour for each agent in line 4 of \textsc{NaiveAllocation} algorithm has approximation factor $\alpha$. Then, \textsc{NaiveAllocation} is an $\alpha k$ approximation algorithm for the Heterogeneous Task Allocation Problem (HTAP).
	\label{thm_naive}
\end{theorem}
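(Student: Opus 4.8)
The plan is to bound the objective value of the solution returned by \textsc{NaiveAllocation} in two steps: first against the optimal tour cost $C^*(T)$ on the full task set, and then against the HTAP optimum $\mathrm{OPT} := \min \max_j C^*(S_j)$.

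For the first step, I would note that the bottleneck agent in the naive allocation is $A_1$, which is assigned all of $T$, while every other agent $A_j$ is assigned only $T_{f(j)} \subseteq T$. The optimal tour cost is monotone under taking subsets: given the optimal tour on a set $V'$, shortcutting it past the vertices of $V'\setminus V''$ (using the triangle inequality, and noting both tours pass through $v_s$) yields a tour on $V''$ of no larger cost, so $C^*(V'')\le C^*(V')$. Hence $C^*(T_{f(j)})\le C^*(T)$ for every $j$. Since line 4 uses an $\alpha$-approximation, the tour it computes for any agent costs at most $\alpha$ times the optimal tour cost on that agent's assigned set, so the objective value of the returned solution is at most $\alpha\,C^*(T)$.

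For the second step, let $S_1^*,\dots,S_k^*$ be an optimal HTAP allocation, so $\cup_j S_j^*=T$ and $C^*(S_j^*)\le \mathrm{OPT}$ for all $j$. For each nonempty $S_j^*$, take an optimal tour $\tau_j$ on it --- a closed walk through $v_s$ of cost $C^*(S_j^*)$. Splicing all these tours together at the common start vertex $v_s$ produces a single closed walk that visits every vertex of $T$ and has total cost at most $k\,\mathrm{OPT}$; shortcutting away repeated vertices via the triangle inequality converts it into a genuine tour on $T$ of cost at most $k\,\mathrm{OPT}$, so $C^*(T)\le k\,\mathrm{OPT}$. Chaining the two bounds gives objective value $\le \alpha k\,\mathrm{OPT}$, as claimed. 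The only real subtlety I anticipate is the concatenate-and-shortcut argument --- ensuring that all the $\tau_j$ are written to share the endpoint $v_s$ so they merge into one closed walk, and that shortcutting never increases cost; the remaining monotonicity bookkeeping is routine.
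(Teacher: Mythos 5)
Your proof is correct and follows essentially the same route as the paper: bound the returned objective by $\alpha\,C^*(T)$, then bound $C^*(T)$ by $k\cdot\mathrm{OPT}$ via concatenating the optimal tours on the $S_j^*$ at $v_s$ and shortcutting. Your first step is in fact slightly more careful than the paper's (which simply asserts that agent $A_1$'s tour is the maximum), since you bound every agent's tour via monotonicity of $C^*$ under subsets rather than assuming the approximation algorithm itself is monotone.
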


\begin{proof}
	For $1\leq j \leq k$, let $S^*_j$ denote the set of tasks allocated to agent $A_j$ under the optimal allocation for HTAP. Note that $S^*_j \subseteq T~\forall j$, $S^*_j \cap S^*_i = \emptyset$ if $j \neq i$ and $\cup_{j=1}^{k} S^*_j = T$. The set $S^*_j$ is given by $S^*_j \subseteq T_{f(j)} \cup R^*_j$ where $R^*_j$ is the subset of tasks in $T_0$ allocated to agent $A_j$ under the optimal allocation policy. Note that $R^*_j \subseteq T_0~\forall j$, $R^*_j \cap R^*_i = \emptyset$ for $j \neq i$ and $\cup_{j=1}^{k} R^*_j = T_0$.
	
	For $1\leq j \leq k$, let $S_j$ denote the set of tasks allocated to agent $A_j$ by \textsc{NaiveAllocation}.  Note that $S_j \subseteq T~\forall j$, $S_1 = T$ and $S_j =  T_{f(j)},~j\neq1$.	Let $C^*(\cdot)$ denote the cost to optimally tour a given set of tasks and let $C_{NA}(\cdot)$ denote the tour costs returned by  the \textsc{NaiveAllocation} algorithm.
	
	The approximation ratio $R$ for the \textsc{NaiveAllocation} algorithm is given by
	\begin{equation*}
		\begin{split}
		R  =~& \frac{\max\limits_{1 \leq j \leq k} C_{NA}(S_j)}{\max\limits_{1 \leq j \leq k} C^*(S^*_j)} = \frac{C_{NA}(S_1)}{\max\limits_{1 \leq j \leq k} C^*(S^*_j)} = \frac{C_{NA}(T)}{\max\limits_{1 \leq j \leq k} C^*(S^*_j)}\\
		& \leq \frac{\alpha C^*(T)}{\max\limits_{1 \leq j \leq k} C^*(S^*_j)}
	      \leq \frac{\alpha ( \sum_{j=1}^{k}C^*(S^*_j))}{\max\limits_{1 \leq j \leq k} C^*(S^*_j)} \leq \alpha k.
		\end{split}
	\end{equation*}
	The inequality $C_{NA}(T) \leq \alpha C^*(T)$ follows from the fact that an $\alpha$-approximation algorithm is used to compute the tour in line 4 of \textsc{NaiveAllocation}. Also, $C^*(T) \leq \sum_{j=1}^{k} C^*(S_j)$ as $S_j,~1 \leq j \leq k$, form a partition of $T$ and the triangle inequality holds.
\end{proof}

The previous theorem shows that the approximation factor of even a naive algorithm as the one described by Algorithm \ref{alg:NaiveAlg} is bounded. However, the bound grows linearly with the number of agents $k$. This motivates us to look for more efficient algorithms to solve the task allocation problem at hand, in particular, algorithms that perform well as the number of agents become large. We provide two constant factor algorithms in the following section. 

\section{Constant Factor Approximation Algorithms for the Heterogeneous Task Allocation Problem}

\subsection{Cycle Splitting Approach}
Consider an instance of HTAP. In order to find an allocation of tasks to agents, we must allocate type-specific tasks among agents of the required type and allocate generic tasks among all agents. We approach the problem by handling these two allocations separately.

Given a set of tasks $T_i$ and a number $k$, let \textit{TaskSplitter} be any algorithm that splits the set of tasks $T_i$ into $k$ sub-tours $\{T_{i1}, T_{i2}, ..., T_{ik}\}$ within some factor $\beta$ of the optimal split (in the min-max sense). For example, Frederickson, Hecht and Kim \cite{frederickson1976approximation} give a tour splitting heuristic that takes a tour on a set of locations to be visited (first node of which is set as the start node), and a positive integer $k$ as input and gives a set of $k$ subtours (starting and ending at the start node) as the output. The cost of these subtours are within a factor of $1 + F - 1/k$ of the optimal min-max cost, where $F$ is the approximation factor of the algorithm used to generate the initial tour on all input locations. 

Consider the following algorithm to allocate a given set of tasks $T$ to a group of $k$ heterogeneous agents.

\begin{breakablealgorithm}
\caption{Min-Max Tour by \textsc{CycleSplit} Algorithm}
\label{alg:CycleSplitAlg}
\begin{algorithmic}[1]
\Procedure{CycleSplit}{$A,T,G$}
		\For{each agent type $i$, $1 \leq i \leq m$}
			\State \parbox[t]{\dimexpr 415pt}{Find Christofides' tour (starting and ending at $v_s$) on the set of type-specific tasks of type $i$.\strut}
			\State \parbox[t]{\dimexpr 415pt}{Use \textit{TaskSplitter} to get $m_i$ subtours (starting and ending at $v_s$) on set $T_i$.\strut}
			\State \parbox[t]{\dimexpr 415pt}{Allocate one subtour to each agent of type $i$. \strut}
		\EndFor
		\State \parbox[t]{\dimexpr 433pt}{Find Christofides' tour (starting and ending at $v_s$) on the set of generic tasks $T_0$.\strut}
		\State \parbox[t]{\dimexpr 433pt}{Use \textit{TaskSplitter} to split tour on $T_0$ into $k$ subtours (starting and ending at $v_s$), denoted by $\{R_1, R_2, \ldots, R_k\}$.\strut}
		\State \parbox[t]{\dimexpr 433pt}{Allocate subtour $R_j$ to agent $A_j, ~1\leq j \leq k$.\strut}
		\State \parbox[t]{\dimexpr 433pt}{Combine the type-specific task subtour and generic task subtour allocated to each agent.\strut}
		\State \parbox[t]{\dimexpr 433pt}{Return a tour for each agent.\strut}
\EndProcedure
\end{algorithmic}
\end{breakablealgorithm}

\begin{theorem}
\textsc{CycleSplit} is a $2\beta$-approximation algorithm for the Heterogeneous Task Allocation Problem (HTAP), where $\beta$ is the approximation factor of the algorithm \textit{TaskSplitter} used in steps 4 and 8.
\label{thm_2beta}
\end{theorem}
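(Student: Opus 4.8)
The plan is to bound the cost of the tour that \textsc{CycleSplit} assigns to each agent by separately bounding the type-specific portion and the generic portion, and then relating each of these to the optimal min-max cost $\mathrm{OPT} = \max_j C^*(S_j^*)$. Let me fix notation: for an agent $A_j$ of type $i = f(j)$, the algorithm hands it a subtour on some set of type-specific tasks (a piece of the \textit{TaskSplitter} split of the Christofides tour on $T_i$) together with a subtour $R_j$ on some generic tasks (a piece of the \textit{TaskSplitter} split of the Christofides tour on $T_0$). Combining these two closed walks at $v_s$ and shortcutting gives the returned tour; by the triangle inequality its cost is at most the sum of the costs of the two subtours. So it suffices to show each subtour costs at most $\beta\cdot\mathrm{OPT}$.

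The first key step is to show $C^*(T_i) \le 2\,\mathrm{OPT}$ for every type $i$. Here is the idea: in the optimal solution, the tasks of $T_i$ are distributed among the $m_i$ agents of type $i$ (type-specific tasks of type $i$ can go nowhere else), so $T_i \subseteq \bigcup_{j:\,f(j)=i} S_j^*$. Each such $S_j^*$ is toured optimally at cost at most $\mathrm{OPT}$, and concatenating these $m_i$ optimal tours (all sharing the vertex $v_s$) yields a closed walk through all of $T_i$ of total cost at most $m_i\cdot\mathrm{OPT}$ — but this is too weak. The sharper argument: the \textit{TaskSplitter} applied to $T_i$ produces $m_i$ subtours whose max cost is within $\beta$ of the \emph{optimal} min-max split of $T_i$ into $m_i$ parts, and that optimal min-max split has value at most $\mathrm{OPT}$ (the optimal HTAP solution restricted to type-$i$ tasks is one feasible such split, since deleting generic tasks from each $S_j^*$ only decreases tour cost by the triangle inequality). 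Wait — \textit{TaskSplitter} as described takes a \emph{tour} as input, not an arbitrary point set, and its guarantee is relative to the cost of that input tour, not to the optimal min-max split; the Frederickson–Hecht–Kim bound is $1 + F - 1/k$ applied to a given tour. So the correct chain is: the Christofides tour on $T_i$ has cost at most $\tfrac{3}{2}C^*(T_i)$; \textit{TaskSplitter} splits it into $m_i$ subtours each of cost at most $\beta$ times \dots — and the cleanest self-contained route is to assume \textit{TaskSplitter}'s guarantee is stated as: each output subtour has cost at most $\beta \cdot \max\{$optimal min-max split cost of the point set$\}$, which is how $\beta$ is defined in the paragraph preceding the theorem (``within some factor $\beta$ of the optimal split''). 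Under that reading, each type-specific subtour costs at most $\beta\cdot(\text{optimal min-max split of }T_i\text{ into }m_i) \le \beta\cdot\mathrm{OPT}$.

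The second key step is the analogous bound for the generic subtours: the \textit{TaskSplitter} split of $T_0$ into $k$ parts has max cost at most $\beta$ times the optimal min-max split of $T_0$ into $k$ parts, and that optimal value is at most $\mathrm{OPT}$ because $\{R_1^*,\dots,R_k^*\}$ — the generic tasks as distributed by the optimal HTAP solution — is a feasible split of $T_0$ into $k$ parts, and touring $R_j^*$ alone costs at most $C^*(S_j^*) \le \mathrm{OPT}$ by the triangle inequality. Hence $C_{CS}(S_j) \le \beta\cdot\mathrm{OPT} + \beta\cdot\mathrm{OPT} = 2\beta\cdot\mathrm{OPT}$ for every agent $A_j$, which is the claim.

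The main obstacle is the precise interface of \textit{TaskSplitter}: the theorem statement says ``where $\beta$ is the approximation factor of \textit{TaskSplitter},'' but the Frederickson–Hecht–Kim heuristic's guarantee is naturally phrased relative to the cost of the \emph{input tour} (yielding the composite constant $1 + F - 1/k$ with $F = 3/2$ from Christofides, i.e. $\beta$ absorbing the Christofides factor), not relative to the optimal min-max split of a point set. I would resolve this by taking the paragraph's wording at face value — $\beta$ is defined there as the factor by which \textit{TaskSplitter}'s output exceeds ``the optimal split (in the min-max sense)'' of the point set — so that the two ``$\le \mathrm{OPT}$'' comparisons above go through directly; the only real content is then the two observations that the optimal HTAP allocation induces a feasible min-max split of $T_i$ (resp. $T_0$) with value at most $\mathrm{OPT}$, both of which are immediate from the triangle inequality and Remark \ref{remark1}.
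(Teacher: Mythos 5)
Your proposal is correct and follows essentially the same route as the paper's proof: split each agent's tour cost into its type-specific and generic subtour costs via the triangle inequality, then bound each piece by $\beta$ times the optimal min-max split of $T_i$ (into $m_i$ parts) or $T_0$ (into $k$ parts), observing that the optimal HTAP allocation restricted to those task sets is a feasible split of value at most $\mathrm{OPT}$. Your reading of the \textit{TaskSplitter} guarantee (relative to the optimal min-max split of the point set, per the paragraph defining $\beta$) is exactly the reading the paper's proof relies on in its corresponding inequalities.
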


\begin{proof}
For $1 \leq j \leq k$, let $S_j^*$ be the allocation of tasks to agent $A_j$ under an optimal algorithm for HTAP. Then, $S_j^*$ can be expressed as $S_j^*=V_j^* \cup R_j^*$, where $V_j^*$ is the subset of type-specific tasks allocated to agent $A_j$ and $R_j^*$ is the subset of generic tasks assigned to agent $A_j$. Let $S_j = V_j \cup R_j$ be the allocation to agent $A_j$ by the \textsc{CycleSplit} algorithm, where $V_j$ is the subset of type-specific tasks allocated to agent $A_j$ and $R_j$ is the subset of generic tasks allocated to agent $A_j$. Let $C^*(\cdot)$ denote the cost to optimally tour a given set of tasks starting and ending from the node $v_s$. Let $C_{TS}(R_j)$ and $C_{TS}(V_j)$ denote the cost of the subtours $R_j$ and $V_j$ returned by the \textit{TaskSplitter} algorithm in steps 4 and 8 respectively. Let $C_{CS}(S_j)$ denote the cost of a tour on $S_j$ returned by the \textsc{CycleSplit} algorithm. Thus, the approximation factor of the \textsc{CycleSplit} algorithm is given by
\begin{align}
	R & = \frac{\max\limits_{1 \leq j \leq k} C_{CS}(S_j)}{\max\limits_{1 \leq j \leq k} C^*(S_j^*)} \leq \frac{\max\limits_{1 \leq j \leq k} \{C_{TS}(V_j) + C_{TS}(R_j)\}}{\max\limits_{1 \leq j \leq k} C^*(S_j^*)} \notag \\ 
	& \leq \frac{\max \limits_{1 \leq j \leq k} C_{TS}(V_j)}{\max \limits_{1 \leq j \leq k} C^*(S^*_j)} + \frac{\max \limits_{1 \leq j \leq k} C_{TS}(R_j)}{\max \limits_{1 \leq j \leq k} C^*(S^*_j)} \notag  \\ 
	& \leq \frac{\max\limits_{1 \leq j \leq k} C_{TS}(V_j)}{\max\limits_{1 \leq j \leq k} C^*(V_j^*)} + \frac{\max\limits_{1 \leq j \leq k} C_{TS}(R_j)}{\max\limits_{1 \leq j \leq k} C^*(R_j^*)},
\label{HTAP_R}
\end{align}
where we use the facts that $S_j = V_j \cup R_j$, $S^*_j = V^*_j \cup R^*_j$ and that the triangle inequality holds.

Consider a case with the same set of type-specific tasks as above, but with no generic tasks. Let $V'_1, V'_2, \ldots, V'_k$ denote the set of tasks allocated to agents under an optimal allocation (in the sense of minimizing the maximum cost to tour any subset). Since the \textsc{CycleSplit} algorithm allocates type-specific tasks independent of generic tasks, the allocation under this algorithm will be $V_1, V_2, \ldots, V_k$. Since the set of tasks with only type-specific tasks is a subset of the total set of tasks (including generic tasks), min-max cost for case with only type-specific tasks cannot exceed the min-max cost for the set of total tasks. Thus, 
\begin{equation}
\max\limits_{1 \leq j \leq k} C_{TS}(V_j) \leq \beta \max\limits_{1 \leq j \leq k} C^*(V'_j).
\label{eq:inter1}
\end{equation}

The inequality in the equation above follows from the fact that \textit{TaskSplitter} algorithm splits a tour into subtours that are within a factor $\beta$ of the optimal min-max cost. By the same argument as before, min-max cost to optimally tour $\{V'_j\},~1\leq j \leq k$ cannot exceed the min-max cost to optimally tour $\{V^*_j\},~1\leq j \leq k$, by the optimality of the partition $\{V'_j\},~1\leq j \leq k$. Thus, 
\begin{equation}
\max\limits_{1 \leq j \leq k} C^*(V'_j) \leq \max\limits_{1 \leq j \leq k} C^*(V^*_j).
\label{eq:inter2}
\end{equation}

Combining equations \eqref{eq:inter1} and \eqref{eq:inter2}, we get 
\begin{equation}
\max\limits_{1 \leq j \leq k} C_{TS}(V_j) \leq \max\limits_{1 \leq j \leq k} \beta C^*(V_j^*) \label{CS_1}.
\end{equation}

Using a similar procedure as before, this time comparing the allocation of a case with no type specific tasks against the set of all tasks, we get,
\begin{equation}
\max\limits_{1 \leq j \leq k} C_{TS}(R_j) \leq \max\limits_{1 \leq j \leq k} \beta C^*(R_j^*) \label{CS_2}.
\end{equation}

Substituting equations \eqref{CS_1} and \eqref{CS_2} in equation \eqref{HTAP_R}, we get
\begin{equation}
R \leq 2 \beta.
\end{equation}
\end{proof}

Frederickson, Hecht and Kim \cite{frederickson1976approximation} provide an algorithm \textit{SPLITOUR}\footnote{Frederickson, Hecht and Kim call the algorithm \textit{k-SPLITOUR} in their paper; however, we refer to it as \textit{SPLITOUR} in order to avoid confusing $k$ with the number of agents as we use it in this paper.} that splits a tour (starting and ending at a start-node) on a set of nodes into $k$ subtours each starting and ending at the start node. This algorithm has an approximation factor $1 + F - \frac{1}{k}$, where $F$ is the approximation factor of the algorithm used for constructing the initial tour.

\begin{corollary}
	\textsc{CycleSplit} algorithm using the \textit{SPLITOUR} algorithm from \cite{frederickson1976approximation} for task splitting is a $5-\frac{2}{k}$ approximation algorithm for the Heterogeneous Task Allocation Problem (HTAP).
\label{prop_5fac}
\end{corollary}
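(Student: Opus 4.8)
The plan is to derive the Corollary directly from Theorem~\ref{thm_2beta} by pinning down the value of $\beta$ that the particular choice of \textit{TaskSplitter} (namely \textit{SPLITOUR} from \cite{frederickson1976approximation}) guarantees for every split carried out inside \textsc{CycleSplit}. Theorem~\ref{thm_2beta} states that \textsc{CycleSplit} is a $2\beta$-approximation whenever each invocation of \textit{TaskSplitter} in steps~4 and~8 yields a min-max split within a factor $\beta$ of optimal, so it is enough to produce a single constant $\beta$ that validly bounds the quality of all of those splits at once.

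First I would recall the guarantee of \textit{SPLITOUR}: when it is fed a tour produced by an $F$-approximation tour-construction routine and is asked to output $p$ subtours, its cost is within a factor $1+F-\frac{1}{p}$ of the optimal min-max $p$-split. In \textsc{CycleSplit} every initial tour---the tour on each type-specific set $T_i$ in step~3 and the tour on the generic set $T_0$ in step~7---is built by Christofides' algorithm, so $F=\frac{3}{2}$ throughout. In step~8 the generic tour is split into $k$ subtours, giving a factor $1+\frac{3}{2}-\frac{1}{k}=\frac{5}{2}-\frac{1}{k}$. In step~4 the type-$i$ tour is split into $m_i$ subtours, giving a factor $1+\frac{3}{2}-\frac{1}{m_i}=\frac{5}{2}-\frac{1}{m_i}$; since $m_i\le\sum_{\ell=1}^{m}m_\ell=k$ we have $\frac{1}{m_i}\ge\frac{1}{k}$, hence $\frac{5}{2}-\frac{1}{m_i}\le\frac{5}{2}-\frac{1}{k}$.

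Consequently every split produced in steps~4 and~8 is within a factor $\beta=\frac{5}{2}-\frac{1}{k}$ of the corresponding optimal min-max split, and applying Theorem~\ref{thm_2beta} with this value of $\beta$ gives an approximation factor of $2\beta=5-\frac{2}{k}$ for \textsc{CycleSplit} run with \textit{SPLITOUR}, which is exactly the claim. The only point that needs a moment's attention is the monotonicity observation in the previous paragraph: because \textit{SPLITOUR} invoked with fewer parts ($m_i\le k$) can only improve, never degrade, the guarantee relative to the $k$-part split, a single constant $\beta=\frac{5}{2}-\frac{1}{k}$ legitimately dominates all the splits, which is what lets Theorem~\ref{thm_2beta} be invoked verbatim. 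Beyond this, no genuinely hard step is expected.
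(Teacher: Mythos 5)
Your proposal is correct and follows essentially the same route as the paper: both identify the \textit{SPLITOUR}-with-Christofides factor as $\tfrac{5}{2}-\tfrac{1}{p}$ for a $p$-way split, observe that $m_i \le k$ so the type-specific splits satisfy the same bound as the $k$-way generic split, and combine the two contributions via Theorem~\ref{thm_2beta} (the paper substitutes the two factors directly into its equation~\eqref{HTAP_R} and then bounds $\max_i m_i$ by $k$, which is the same monotonicity step you make up front). No gap.
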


\begin{proof}
	The algorithm \textit{SPLITOUR} is a $\frac{5}{2}-\frac{1}{k}$ factor algorithm to split a tour using Christofides' algorithm into $k$ subtours (since Christofides provides an $F=\frac{3}{2}$ factor approximation for the initial tour). Equation \eqref{HTAP_R} can be written as
	\begin{equation}
	R \leq \Bigg(\frac{5}{2} - \frac{1}{\max\limits_{1 \leq i \leq m} m_i}\Bigg) + \Bigg(\frac{5}{2} - \frac{1}{k}\Bigg).
	\label{HTAP_GenEq}
	\end{equation}
	
	We can upper bound $\max\limits_{1 \leq i \leq m} m_i$ with $k$. Thus, 
	\begin{equation}
	R \leq 5 - \frac{2}{k}.
	\label{HTAP_eq}
	\end{equation}
\end{proof}

\begin{corollary}
	In the special instance of HTAP where there is exactly one agent of each type (i.e., the heterogeneous agents are distinct), \textsc{CycleSplit} is a $4-\frac{1}{k}$ approximation factor algorithm.
\label{cor_4fac}
\end{corollary}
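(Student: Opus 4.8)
The plan is to reuse the analysis behind Corollary~\ref{prop_5fac} and merely specialize the per-type \textit{SPLITOUR} factor. When there is exactly one agent of each type we have $m_i = 1$ for every $i$ and $k = m$, so in step~4 of \textsc{CycleSplit} the \textit{TaskSplitter} (instantiated as \textit{SPLITOUR}) is asked to split the Christofides tour on $T_i$ into $m_i = 1$ subtour; it therefore returns that tour unchanged, with approximation factor $1 + F - \tfrac{1}{1} = F = \tfrac32$ rather than $\tfrac52 - \tfrac1{m_i}$. Step~8, which splits the generic tour among all $k$ agents, is untouched and retains factor $1 + F - \tfrac1k = \tfrac52 - \tfrac1k$.

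Concretely, I would observe that this is exactly inequality \eqref{HTAP_GenEq} evaluated at $\max_{1\le i\le m} m_i = 1$: the first bracket, which bounds the contribution of the type-specific splits and in general equals $\tfrac52 - \tfrac1{\max_i m_i}$, collapses to $\tfrac52 - 1 = \tfrac32$, while the second bracket stays $\tfrac52 - \tfrac1k$. Adding the two gives
\[
R \;\le\; \frac32 + \left(\frac52 - \frac1k\right) \;=\; 4 - \frac1k,
\]
as claimed. If one prefers to see this from the ground up rather than by quoting \eqref{HTAP_GenEq}, one re-runs the chain \eqref{HTAP_R}--\eqref{CS_2} from the proof of Theorem~\ref{thm_2beta}: the type-specific term now satisfies $\max_j C_{TS}(V_j) \le \tfrac32\,\max_j C^*(V_j^*)$ — and here one may additionally note that with a single agent per type the optimal HTAP allocation is \emph{forced} to assign all of $T_i$ to the unique type-$i$ agent, so $V_j^* = V_j' = T_{f(j)}$ and the auxiliary step \eqref{eq:inter2} holds with equality — while the generic term is bounded exactly as before by $\max_j C_{TS}(R_j) \le \left(\tfrac52 - \tfrac1k\right)\max_j C^*(R_j^*)$.

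I do not expect any genuine obstacle. The only points needing care are the degenerate behaviour of \textit{SPLITOUR} when it is asked for a single subtour (its guarantee improves from $\tfrac52 - \tfrac1{m_i}$ to the underlying Christofides factor $F = \tfrac32$), and the remark that substituting $\max_i m_i = 1$ is precisely the specialization for which \eqref{HTAP_GenEq} was written, so that the result is essentially immediate.
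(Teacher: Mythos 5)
Your proposal is correct and follows the paper's own proof essentially verbatim: the paper likewise simply substitutes $\max_{1\le i\le m} m_i = 1$ into inequality \eqref{HTAP_GenEq}, obtaining $R \le \left(\tfrac{5}{2}-1\right) + \left(\tfrac{5}{2}-\tfrac{1}{k}\right) = 4 - \tfrac{1}{k}$. Your added remarks on the degenerate behaviour of \textit{SPLITOUR} for $m_i=1$ and on $V_j^* = T_{f(j)}$ are sound elaborations of why that substitution is legitimate, but they do not change the route.
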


\begin{proof}
	In this special case, $m_i = 1,~1\leq i \leq m$. Thus $\max\limits_{1 \leq i \leq m} m_i = 1$. So, equation \eqref{HTAP_GenEq} can be simplified as 
	\begin{equation*}
	R \leq \Bigg(\frac{5}{2} - 1 \Bigg) + \Bigg(\frac{5}{2} - \frac{1}{k}\Bigg) = 4 -\frac{1}{k}.
	\end{equation*}
\end{proof}

\begin{remark}
	From Corollary \ref{prop_5fac}, we see that regardless of the value of $k$, \textsc{CycleSplit} is a $5$-approximation algorithm for HTAP. Furthermore, from Corollary \ref{cor_4fac}, we can see that for the special instance of HTAP where all heterogeneous agents are distinct, \textsc{CycleSplit} is a $4$-approximation algorithm.
\end{remark}

Note that the \textsc{CycleSplit} algorithm while using \textit{SPLITOUR} as \textit{TaskSplitter} in lines 4 and 8, splits a TSP tour on the set of tasks into $k$ subtours evenly and allocates each subtour to one of the agents. While this approach works well when all agents have same the capacity (or balanced initial load), it may not perform as well when different agents have different capacity or initial load. In a case where agents have type-specific tasks that they must perform, each agent may have different task loads prior to allocation of generic tasks. In order to optimize the min-max tour cost, it is advantageous to take this into consideration. This motivates the formulation of an algorithm that takes into consideration the existing ``load'' of each agent from type-specific tasks when allocating generic tasks to agents. To address this, we propose a modified algorithm in the following section.

\subsection{Min-Max Splitting Approach for Heterogeneous Agents}
Based on the intuition gained from the \textsc{CycleSplit} algorithm, we propose a modified algorithm called \textsc{HeteroMinMaxSplit}. This algorithm considers the fact that agents may have different ``loads'' after the allocation of type-specific tasks. Specifically, instead of splitting a tour on the set of generic tasks into nearly equal segments of a given size, the generic tasks are allocated to agents while taking into consideration the cost incurred by the agent to tour its specific tasks. In the \textsc{HeteroMinMaxSplit} algorithm, we allocate tasks to agents in three phases.

\begin{itemize}
	\item \textbf{Phase 1}: Type-specific task allocation
	\item \textbf{Phase 2}: Generic task allocation (accounting for Phase 1 allocation)
	\item \textbf{Phase 3}: Rebalancing tasks within agents of each type.
\end{itemize} 

In Phase 1, type-specific tasks are allocated among agents of the associated type as in \textsc{CycleSplit}. The generic tasks are allocated among all agents in Phase 2. Unlike in \textsc{CycleSplit}, however, the allocation during Phase 2 tries to balance the total cost incurred by agents by taking into account the existing allocation to agents after Phase 1. Phase 3 is a post-processing step aimed to further balance the load among agents. After Phase 2, all tasks allocated to agents of the same type can be done by all agents in that type. Thus, we pool all tasks allocated to agents of the type $i$ into a set $T'_i$ and re-allocate them to agents of type $i$ by splitting the Christofides' tour on $T'_i$ into $m_i$ sub-tours (Algorithm \textit{SPLITOUR} may be used for this purpose \cite{frederickson1976approximation}). Balancing the load in this fashion can lower the min-max tour cost by distributing the load more evenly among agents, as we will show later.

We present the algorithm in two steps. We first give an algorithm \textsc{HeteroSplit} that takes an additional input $\lambda$, which we take to be a rational number. This algorithm finds tours (if they exist) for each agent such that cost for each agent to complete its tour is less than $\lambda$. The \textsc{HeteroMinMaxSplit} algorithm then performs a binary search on $\lambda$, running \textsc{HeteroSplit} in each iteration, to find the best set of tours for the agents.

\begin{breakablealgorithm}
	\caption{Heterogeneous Task Split within bound $\lambda$}
	\label{alg:HeteroSplit}
	\begin{algorithmic}[1]
		\LeftComment Let $\lambda$ be a rational parameter that denotes the desired upper bound for the tour length for any agent
		\Procedure{HeteroSplit}{$A,T,G, \lambda$}
		\LeftComment \textbf{Phase 1:} Type-specific task allocation
		\For{each agent type $i$, $1 \leq i \leq m$}
			\State \parbox[t]{\dimexpr 415pt}{Find Christofides' tour (starting and ending at $v_s$) on the set of type-specific tasks of type $i$.\strut}
			\State \parbox[t]{\dimexpr 415pt}{Use \textit{TaskSplitter} to split the tour on $T_i$ into $m_i$ subtours (starting and ending at $v_s$).\strut}
			\State \parbox[t]{\dimexpr 415pt}{Allocate one subtour to each agent of type $i$. \strut}
		\EndFor
		\LeftComment \textbf{Phase 2:} Generic task allocation
		\State \parbox[t]{\dimexpr 433pt}{Mark all agents as free. \strut}
		\State \parbox[t]{\dimexpr 433pt}{Remove from $G$, all vertices in $T_i,~1\leq i \leq m$ and all edges incident on these vertices. Denote the resulting graph by $G'$.\strut}
		\State \parbox[t]{\dimexpr 433pt}{Mark all tasks in graph $G'$ as unallocated. Find Christofides' tour starting and ending at $v_s$ on nodes in $G'$. Denote the tour by $H$.\strut}
		\State \parbox[t]{\dimexpr 433pt}{Consider the next unallocated task, say $t$, along $H$ starting from $v_s$. For each free agent $j$, find cost to tour (starting and ending at $v_s$) all tasks allocated to it along with $t$ using Christofides' algorithm. Select agent with the minimum cost provided the minimum cost is less than $\lambda$. If no free agent can add the task to its tour without exceeding cost $\lambda$, return failure. \strut}
		\State \parbox[t]{\dimexpr 433pt}{Allocate $t$ to the selected free agent. Keep allocating unallocated tasks along $H$ to the selected agent as long as adding the task does not cause the agent's tour cost to exceed $\lambda$.\strut}
		\State \parbox[t]{\dimexpr 433pt}{Remove the tasks allocated in the previous step from the set of unallocated tasks and mark agent as busy.\strut}
		\State \parbox[t]{\dimexpr 433pt}{Go to step 10 if the set of unallocated tasks is non-empty and free agents remain. \strut}
		\LeftComment \textbf{Phase 3:} Rebalancing within agent types
		\For{each agent type $i$, $1 \leq i \leq m$}
			\State \parbox[t]{\dimexpr 415pt}{Deallocate tasks from all $m_i$ agents of type $i$. Let $T'_i$ denote the set of tasks deallocated from agents of type $i$. \strut}
			\State \parbox[t]{\dimexpr 415pt}{Find Christofides' tour on $T'_i$ to get tour $H'$.\strut}
			\State \parbox[t]{\dimexpr 415pt}{Run \textit{TaskSplitter} on $H'$ to split it into $m_i$ subtours.\strut}
			\State \parbox[t]{\dimexpr 415pt}{Allocate one subtour to each agent of type $i$.\strut}
		\EndFor
		\State \parbox[t]{\dimexpr 433pt}{Return tours for each agent. \strut}
		\EndProcedure
	\end{algorithmic}
\end{breakablealgorithm}

A binary search can be performed to find the optimal value for $\lambda$ to be provided as a parameter to \textsc{HeteroSplit}. The value cannot be less than than twice the largest edge from the start node to any task location, i.e., $2 \max\limits_{t \in T}d(v_s,t)$, where $d(v_s, t)$ is the distance from the start node $v_s$ to the location of a task $t$. Also, $\lambda$ cannot exceed $\max\limits_{1 \leq i \leq k} C(T_i) + C(T_0)$, where $C(\cdot)$ is the cost to do a Christofides' tour on the given set of tasks. Hence, we can perform a binary search within this window to find the optimal value of $\lambda$.

\begin{breakablealgorithm}
	\caption{Min-Max Tour by \textsc{HeteroMinMaxSplit} Algorithm}
	\label{alg:HeteroMinMaxSplit}
	\begin{algorithmic}[1]
		\Procedure{HeteroMinMaxSplit}{$A,T,G$}
		\State \parbox[t]{\dimexpr 433pt}{Do binary search in the interval $[2 \max\limits_{t \in T} d(v_s,t),~\max\limits_{1 \leq i \leq k} C(T_i) + C(T_0)]$ to find smallest value of $\lambda$  for which \textsc{HeteroSplit}($A,T,G,\lambda$) returns a set of valid tours. \strut}
		\State \parbox[t]{\dimexpr 443pt}{Return the set of tours returned by \textsc{HeteroSplit}($A,T,G,\lambda$). \strut}
		\EndProcedure
	\end{algorithmic}
\end{breakablealgorithm}

\subsection{Bound on Approximation Ratio of \textsc{HeteroMinMaxSplit} Algorithm}
For the purpose of applying \textit{SPLITOUR} from Frederickson, Kim and Hecht's work, we now provide a brief sketch of the algorithm \textit{SPLITOUR} \cite{frederickson1976approximation}. It takes as input a set of $n$ nodes labeled $v_1$ through $v_n$ (with the initial node as the start node) and a positive number $k$. First, the algorithm constructs a Christofides' tour on all the nodes. Let $L$ be the cost of this tour and let $c_{max}$ be the maximum direct distance of any node from the start node, i.e., $c_{max} = \max \limits_{1 < i \leq n} d(v_1, v_n)$. For $1 \leq j <k$, it finds the largest vertex $v_{p(j)}$ along the tour such that the cost to traverse the tour from the start node to $v_{p(j)}$ does not exceed $\frac{j}{k} (L-2c_{max})+c_{max}$. It returns $k$ subtours $(v_1,\ldots,v_{p(1)},v_1)$, $(v_1,v_{p(1)+1},\ldots, v_1)$, $\ldots$, $(v_1,v_{p(k-1)+1},\ldots,v_n,v_1)$, each of which has cost less than $\frac{1}{k} (L-2c_{max})+2c_{max}$. Frederickson, Hecht and Kim \cite{frederickson1976approximation} show that if the length of the subtours do not exceed $\frac{1}{k} (L-2c_{max})+2c_{max}$, then the min-max cost of the subtours is no worse than a factor $(\frac{5}{2} - \frac{1}{k})$ of the optimal min-max cost. 

The following proposition leverages the splitting heuristic summarized above to establish results on the performance of our algorithms.

\begin{theorem}
	\textsc{HeteroMinMaxSplit} is a $(5-\frac{2}{k})$-approximation algorithm for the Heterogeneous Task Allocation Problem (HTAP).
	\label{thm:HMMS}
\end{theorem}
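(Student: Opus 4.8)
The plan is to control the binary search in \textsc{HeteroMinMaxSplit} by showing that \textsc{HeteroSplit}$(A,T,G,\lambda)$ returns a valid set of tours whenever $\lambda\ge(5-\tfrac2k)\,\mathrm{OPT}$, where $\mathrm{OPT}:=\max_{1\le j\le k}C^*(S_j^*)$ is the optimal HTAP value. Since the binary search returns the smallest feasible $\lambda$, and the tours \textsc{HeteroSplit} returns have cost at most that $\lambda$, this yields max tour cost at most $(5-\tfrac2k)\,\mathrm{OPT}$. Two lower bounds are used throughout: first, $\mathrm{OPT}\ge 2\,\max_{t\in T}d(v_s,t)=:2c_{\max}$, since some agent must reach the farthest task and return; second, for every type $i$ the optimal min-max $m_i$-tour cost on $T_i$ is at most $\mathrm{OPT}$, obtained by restricting the HTAP optimum to the $m_i$ agents of type $i$. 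Also, concatenating the optimal generic subtours $R_1^*,\dots,R_k^*$ at $v_s$ and shortcutting gives $C^*(T_0)\le\sum_{j}C^*(R_j^*)\le k\,\mathrm{OPT}$.

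Phase 1 is handled exactly as in the proof of Theorem~\ref{thm_2beta}: applying \textit{SPLITOUR} to the Christofides tour on $T_i$ produces $m_i$ subtours, each of cost at most $\bigl(\tfrac52-\tfrac1{m_i}\bigr)$ times the optimal min-max $m_i$-tour on $T_i$, hence at most $\bigl(\tfrac52-\tfrac1{m_i}\bigr)\mathrm{OPT}\le\bigl(\tfrac52-\tfrac1k\bigr)\mathrm{OPT}$. So after Phase 1 every agent carries a ``load'' $\ell_j\le\bigl(\tfrac52-\tfrac1k\bigr)\mathrm{OPT}$.

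Phase 2 is the crux. Let $H$ be the Christofides tour on $T_0$, of length $L_0\le\tfrac32 C^*(T_0)\le\tfrac{3k}{2}\,\mathrm{OPT}$, and set $c_{\max}^0:=\max_{t\in T_0}d(v_s,t)\le c_{\max}$. I would redo the Frederickson--Hecht--Kim accounting underlying \textit{SPLITOUR}, now in the presence of the nonuniform loads $\ell_j$: sweeping $H$ from $v_s$, the greedy hands each selected agent a contiguous arc of $H$; bounding the agent's tour by its Phase-1 subtour followed (through $v_s$) by its arc plus the two $v_s$-connectors, which cost at most $2c_{\max}^0$, one shows that the arcs can be taken short enough that every agent's total cost stays at most
\[
\Bigl(\tfrac52-\tfrac1k\Bigr)\mathrm{OPT}+\frac1k\bigl(L_0-2c_{\max}^0\bigr)+2c_{\max}^0
\le\Bigl(\tfrac52-\tfrac1k\Bigr)\mathrm{OPT}+\tfrac32\,\mathrm{OPT}+2c_{\max}^0\Bigl(1-\tfrac1k\Bigr)
\le\Bigl(5-\tfrac2k\Bigr)\mathrm{OPT},
\]
using $L_0\le\tfrac{3k}{2}\mathrm{OPT}$ and $2c_{\max}^0\le 2c_{\max}\le\mathrm{OPT}$. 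Hence with $\lambda=(5-\tfrac2k)\mathrm{OPT}$ no agent ever needs its tour to exceed $\lambda$, so the greedy never exhausts the free agents with generic tasks still pending and never reports failure, i.e.\ \textsc{HeteroSplit} succeeds. I expect this to be the main obstacle: one must make the arc-covering argument rigorous for unequal $\ell_j$ (the ``$-2c_{\max}$ then $+2c_{\max}$'' device of \textit{SPLITOUR} has to be threaded through, together with an auxiliary bound such as $\sum_j\ell_j\le(\tfrac{5k}{2}-1)\mathrm{OPT}$ on the total Phase-1 load), and one must verify that the cost the greedy charges an agent is no larger than the stitched tour used in this estimate, so that the bound is legitimate for the tour the algorithm actually builds.

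Finally, Phase 3 re-pools the tasks of each type into $T_i'$ and re-splits the Christofides tour on $T_i'$ by \textit{SPLITOUR} into $m_i$ subtours; since the Phase-2 assignment to the $m_i$ agents of type $i$ is itself a valid $m_i$-cover of $T_i'$ of max cost at most $\lambda$, one takes the re-split only when it does not increase the current type-$i$ maximum, so Phase 3 never worsens the objective and the returned tours still have max cost at most $\lambda$. Running the binary search over $[\,2c_{\max},\ \max_i C(T_i)+C(T_0)\,]$ --- whose top endpoint is chosen so that \textsc{HeteroSplit} succeeds, and on which every feasible $\lambda$ yields tours of cost $\le\lambda$ --- therefore returns a value $\lambda_0\le(5-\tfrac2k)\mathrm{OPT}$, and the associated tours achieve max cost at most $(5-\tfrac2k)\mathrm{OPT}$, which is the claim.
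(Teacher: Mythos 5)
Your proposal follows essentially the same route as the paper's proof: exhibit the feasible threshold $\lambda_1 + \tfrac1k(L-2c_{\max})+2c_{\max}$ (Phase-1 maximum plus the \textit{SPLITOUR} bound on the generic tour), argue that the Phase-2 greedy succeeds at that threshold because each selected agent absorbs at least the corresponding \textit{SPLITOUR} arc, bound the two terms by $(\tfrac52-\tfrac{1}{\max_i m_i})\,\mathrm{OPT}$ and $(\tfrac52-\tfrac1k)\,\mathrm{OPT}$ respectively, and let the binary search find it. The only differences are cosmetic: you unroll Frederickson--Hecht--Kim's constants ($L_0\le\tfrac{3k}{2}\mathrm{OPT}$ and $2c_{\max}\le\mathrm{OPT}$) where the paper cites their theorem directly, and you explicitly flag two verification points (that the Christofides cost the greedy actually charges an agent is dominated by the stitched tour used in the estimate, and that Phase 3 does not increase the maximum) that the paper's proof leaves implicit.
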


\begin{proof}
	Let $\lambda_1$ be the maximum cost among all the subtours returned by running \textit{SPLITOUR} on Christofides' tours on type-specific tasks after line 6 of the \textsc{CycleSplit} algorithm, i.e., $\lambda_1 = \max \limits_{1 \leq j \leq k} C_{TS}(V_j)$, where $V_j$ denotes the set of type-specific tasks allocated to agent $A_j$ and $C_{TS}(\cdot)$ denotes the cost associated with the subtour returned by \textit{SPLITOUR}. Note that $\lambda_1$ is also the maximum cost after Phase 1 of the \textsc{HeteroSplit} algorithm.
	
	Running \textit{SPLITOUR} on a Christofides' tour on the set of generic tasks $T_0$ to split it into $k$ subtours will returns subtours of length (or cost) no more than $\frac{1}{k}(L-2c_{max}) + 2c_{max}$, where $L$ is the length (or cost) of the initial Christofides' tour on $T_0$, and $c_{max}$ is the maximum direct distance of any node in $T_0$ from the start node. No matter how each of these subtours are assigned to agents (such that each agent gets one subtour), the total cost is no more than $\lambda_1 + \frac{1}{k}(L-2c_{max}) + 2c_{max}$. 
	
	Set $\lambda$ to be $\lambda_1 + \frac{1}{k}(L-2c_{max}) + 2c_{max}$. Consider \textsc{HeteroSplit} with this value of $\lambda$. In Phase 1, \textsc{HeteroSplit} allocates type-specific tasks to agents, same as steps 2-6 of \textsc{CycleSplit}. The maximum cost of any agent's tour after Phase 1 is $\lambda_1$. In Phase 2, the \textsc{HeteroSplit} algorithm computes a Christofides' tour on the set of generic tasks. The algorithm selects an agent that has minimum cost to complete its current allocated tasks in addition to the next task along the Christofides' tour on generic tasks. The algorithm then allocates tasks along the tour to the selected agent as long as the cost does not exceed $\lambda$. Regardless of which agent gets selected first, the set of tasks that are allocated to the selected agent in \textsc{HeteroSplit} will contain the tasks in the first subtour generated by \textit{SPLITOUR}. The allocation by \textsc{HeteroSplit} to the first selected agent may contain more tasks than the first subtour of \textit{SPLITOUR}, but not less. Thus, after allocating tasks to the first agent, the number of tasks left to be allocated to the remaining $k-1$ agents in \textsc{HeteroSplit} is no more than the number of tasks left in the $k-1$ subtours to be allocated to the remaining $k-1$ agents in \textit{SPLITOUR}.
	
	In \textit{SPLITOUR}, after the allocation of the first subtour to the an agent, it is guaranteed that the remaining $k-1$ subtours can be allocated to the remaining $k-1$ agents, since the allocation is possible regardless of the order in which agents are selected. Given that the set of tasks left for allocation in \textsc{HeteroSplit} is no bigger than the set of tasks left to be allocated in \textit{SPLITOUR}, it is guaranteed that the remaining tasks can be allocated to other $k-1$ agents in \textsc{HeteroSplit} algorithm for this value of $\lambda$. Thus, by inducting on the number of agents to which tasks have been allocated along the initial tour, the \textsc{HeteroSplit} algorithm is guaranteed to return a feasible solution to HTAP for $\lambda =  \lambda_1 + \frac{1}{k}(L-2c_{max}) + 2c_{max}$. 
	
	Since \textsc{HeteroSplit} checks tour cost against $\lambda$ before further allocation of tasks to an agent, it guarantees that the tour costs for all agents is no more than $\lambda$. Thus, the approximation factor is given by, 
	
	\begin{align}
	R & \leq \frac{\lambda}{\max\limits_{1 \leq j \leq k} C^*(S^*_j)} = \frac{\lambda_1}{\max\limits_{1 \leq j \leq k} C^*(S^*_j)}+ \frac{\frac{1}{k}(L-2c_{max}) + 2c_{max}}{\max\limits_{1 \leq j \leq k} C^*(S^*_j)} \notag \\
	& \leq \frac{\max \limits_{1 \leq j \leq k} C_{TS}(V_j)}{\max\limits_{1 \leq j \leq k} C^*(V^*_j)}+ \frac{\frac{1}{k}(L-2c_{max}) + 2c_{max}}{\max\limits_{1 \leq j \leq k} C^*(R^*_j)} \label{subset}\\
	& \leq \Bigg(\frac{5}{2} - \frac{1}{\max\limits_{1 \leq i \leq m} m_i}\Bigg) + \Bigg(\frac{5}{2} - \frac{1}{k}\Bigg) \label{eq:Fred}\\
	& \leq 5 - \frac{2}{k},
	\end{align}
	where equation \eqref{subset} is given by the triangle inequality and the fact that $V^*_j$ and $R^*_j$ are subsets of $S^*_j$. Equation \eqref{eq:Fred} is obtained from the results of Frederickson, Hecht and Kim \cite{frederickson1976approximation}, summarized prior to this theorem.
	
	The \textsc{HeteroMinMaxSplit} algorithm performs a binary search over $\lambda$ and is thus guaranteed to find the $\lambda$ used above by \textsc{HeteroSplit}. Thus, \textsc{HeteroMinMaxSplit} is a $5-\frac{2}{k}$ approximation algorithm for HTAP. 
\end{proof}

\begin{corollary}
	\textsc{HeteroMinMaxSplit} is a $(4-\frac{1}{k})$-approximation algorithm for the special instance of HTAP where all agents are distinct.
	\label{cor_HMMS}
\end{corollary}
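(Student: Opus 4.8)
The plan is to specialize the analysis carried out in the proof of Theorem~\ref{thm:HMMS} to the case $m_i = 1$ for every $i \in \{1,\ldots,m\}$, so that $k = m$ and each agent is the unique agent of its type. First I would observe that in this regime Phase~1 of \textsc{HeteroSplit} is trivial: splitting the Christofides' tour on $T_i$ into $m_i = 1$ subtour returns that tour unchanged and assigns it to the sole agent of type $i$. Consequently the quantity $\lambda_1 = \max_{1\le j\le k} C_{TS}(V_j)$ appearing in that proof is now just the largest Christofides' tour cost over the type-specific task sets $T_i$, and $\max_{1\le j\le k} C^{*}(V_j^{*}) = \max_{1\le i\le m} C^{*}(T_i)$, since the optimal allocation is forced to give all of $T_i$ to the one agent of type $i$. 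I would also note that Phase~3 is vacuous here (pooling and re-splitting among a single agent changes nothing), so the feasibility argument from the proof of Theorem~\ref{thm:HMMS} carries over verbatim.

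Then I would invoke the chain of inequalities culminating in \eqref{eq:Fred} from the proof of Theorem~\ref{thm:HMMS}, namely
\[
R \;\le\; \Bigl(\tfrac{5}{2} - \tfrac{1}{\max_{1\le i\le m} m_i}\Bigr) + \Bigl(\tfrac{5}{2} - \tfrac{1}{k}\Bigr),
\]
which holds for every instance of HTAP. Substituting $\max_{1\le i\le m} m_i = 1$ collapses the first bracket to $\tfrac{5}{2} - 1 = \tfrac{3}{2}$, and the right-hand side becomes $\tfrac{3}{2} + \tfrac{5}{2} - \tfrac{1}{k} = 4 - \tfrac{1}{k}$, the claimed factor. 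Since \textsc{HeteroMinMaxSplit} performs a binary search over $\lambda$ and is therefore guaranteed to locate a value of $\lambda$ at least as good as the one exhibited in the proof of Theorem~\ref{thm:HMMS}, the same bound transfers to \textsc{HeteroMinMaxSplit}. This is entirely parallel to how Corollary~\ref{cor_4fac} follows from Corollary~\ref{prop_5fac}.

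There is essentially no hard step here; the one point worth a sentence of care is confirming that the first term in \eqref{eq:Fred} genuinely tightens to $\tfrac{3}{2}$. That term arose by applying \textit{SPLITOUR}'s $\bigl(\tfrac{5}{2} - \tfrac{1}{m_i}\bigr)$ guarantee to the split of the type-specific tour of type $i$; with $m_i = 1$ the guarantee reads $\bigl(\tfrac{5}{2} - 1\bigr) = \tfrac{3}{2}$, which is just Christofides' $\tfrac{3}{2}$ bound for the (trivial) single-subtour split against the optimal type-specific tour cost. Once this is observed, the remaining arithmetic is immediate and the corollary follows.
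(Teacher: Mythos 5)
Your proof is correct and follows essentially the same route as the paper's: the paper likewise just substitutes $m_i = 1$ (hence $\max_{1\le i\le m} m_i = 1$) into equation \eqref{eq:Fred} from the proof of Theorem~\ref{thm:HMMS} to obtain $R \le 4 - \tfrac{1}{k}$. Your additional observations (Phase~1 returning the unsplit Christofides' tour, Phase~3 being vacuous) are accurate but not needed beyond what the substitution already gives.
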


\begin{proof}
	In this special instance, $m_i = 1$, for $1\leq i \leq m$. Substituting in equation \eqref{eq:Fred}, we get the approximation factor $R \leq 4 - \frac{1}{k}$. 
\end{proof}

From Theorem \ref{thm:HMMS}, we see that \textsc{HeteroMinMaxSplit} is a $5$-approximation algorithm regardless of the value of $k$. In this special instance where all heterogeneous agents are distinct, we see from Corollary \ref{cor_HMMS} that \textsc{HeteroMinMaxSplit} is a $4$-approximation algorithm regardless of the value of $k$.

\subsection{Examples Illustrating \textsc{HeteroMinMaxSplit} and \textsc{CycleSplit} Algorithms}
\textsc{HeteroMinMaxSplit} takes into consideration the ``load'' from type-specific tasks while allocating generic tasks and can outperform \textsc{CycleSplit} when the type-specific tasks are not uniformly distributed among agents. We illustrate this through the following examples.

\begin{example}
	Consider the scenario where three agents are collectively required to do nine tasks. Agent $A_1$ is a type 1 agent, $A_2$ is a type 2 agent and $A_3$ is a type 3 agent. All agents are located at node $v_s$. The start node $v_s$ is located at distance $d' > 2$ from nodes $V_1$ and $V_2$ as shown in Figure \ref{fig:ex1}. Nodes $A$, $B$ and $C$ are located at unit distance from node $V_1$, and node $V_2$ is located at a distance $d > 3$ (such that the triangle inequality is satisfied) from node $C$ as shown in Figure \ref{fig:ex1}. Distances between nodes not directly connected are defined as the sum of distances along the shortest path traversed to reach the node (the graph can be considered as a road network that the agents can traverse).
	
	\begin{figure}[htb]
		\begin{center}
			\begin{tikzpicture}
			[scale=.13, inner sep=4pt, minimum size=7pt, auto=center]
			\node [circle, draw, fill=gray!40, label=180:$A$, label=270:$t_1~t_2$](a) at (-26, 6)  {};
			\node [circle, draw, fill=gray!40, label=5:$C$](c) at (-10, 6)  {};
			\node[label=above:{$t_5~t_6$}] at (-6,0) {};
			\node [circle, draw, fill=gray!40, label=0:$B$, label=270:$t_3~t_4$](e) at (-18, 0)  {};
			\node [circle, draw, fill=gray!90, label=90:$V_1$](b) at (-18, 6)  {};
			\node[label=above:{$t_7~t_8$}] at (-18,10) {};
			\node [circle, draw, fill=gray!90, label=90:$V_2$, label=270:$t_9$](d) at (26, 6)  {};
			\node [circle, draw, fill=black, label=90:$v_s$](f) at (4, 18)  {};
			\draw (a) -- (b) node[midway, above]{$1$};
			\draw (b) -- (c) node[midway, below]{$1$};
			\draw (b) -- (e) node[midway, left]{$1$};
			\draw (c) -- (d) node[midway, below]{$d$};
			\draw (f) -- (b) node[midway, above]{$d'$};
			\draw (f) -- (d) node[midway, above]{$d'$};
			\end{tikzpicture}
		\end{center}
		\caption{Task locations in Example \ref{ex1}}
		\label{fig:ex1}
	\end{figure}
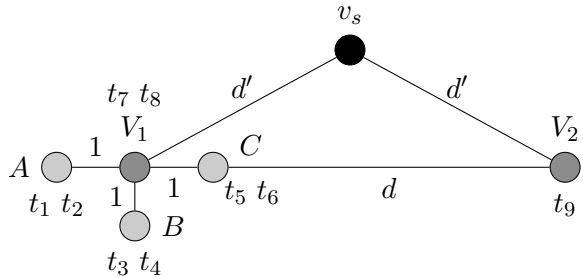
	
	Tasks in set $T_0 = \{t_1,t_2,t_3,t_4,t_5,t_6\}$ are generic tasks. Tasks $\{t_1, t_2\}$ are located at node $A$, $\{t_3, t_4\}$ are located at node $B$, and $\{t_5, t_6\}$ are located at node $C$. Agent $A_1$ has one type-specific task $t_7$, i.e. $T_1 = \{t_7\}$. Agent $A_2$ has one type-specific task $t_8$, i.e. $T_2 = \{t_8\}$. Tasks $t_7$ and $t_8$ are located at node $V_1$. Agent $A_3$ has one type-specific task $t_9$, i.e., $T_3 = \{t_9\}$ and $t_9$ is located at node $V_2$.  
	
	Both \textsc{CycleSplit} and \textsc{HeteroMinMaxSplit} algorithms first allocate the type-specific tasks to the respective agents. Thus, agent $A_1$ is allocated task $t_7$, agent $A_2$ is allocated task $t_8$ and agent $A_3$ is allocated task $t_9$.
	
	\textsc{CycleSplit} algorithm splits the generic tasks at nodes $A$, $B$ and $C$ to the three agents. Thus, agent $A_1$ gets allocated tasks $\{t_1,t_2\}$, agent $A_2$ gets allocated tasks $\{t_3,t_4\}$, and agent $A_3$ gets allocated tasks $\{t_5, t_6\}$. The min-max cost is thus $\max \{2d'+2, 2d'+2, 2d'+d+1\} = 2d'+d+1$.
	
	The \textsc{HeteroMinMaxSplit} algorithm allocates tasks in Phase 2 based on the allocation done in Phase 1. Thus, for $\lambda = 2d'+ 4 < 2d'+d+1$, \textsc{HeteroSplit} allocates all tasks at the nodes $A$, $B$ and $C$ to agents $A_1$ and $A_2$ (which already travel to node $V_1$ to do their type-specific tasks). Thus, tasks $\{t_1, t_2\} $ located at node $A$ are allocated to agent $A_1$, and tasks $\{t_3, t_4, t_5, t_6\} $ located at nodes $B$ and $C$ are allocated to agent $A_2$. Since only one agent of each type is present, agents retain their allocation after Phase 3. The min-max cost is thus $\max \{2d'+2, 2d'+4, 2d'\} = 2d'+4$.
	
	For large values of $d'$ and $d$, \textsc{HeteroMinMaxSplit} gives a factor of 2 over \textsc{CycleSplit} on the min-max tour cost.
	\label{ex1}
\end{example}

Note that in the above example, Phase 3 does not improve the min-max cost. We illustrate the benefit of Phase 3 (rebalancing ``load'' between agents of a given type) of the \textsc{HeteroMinMaxSplit} algorithm in the following example. 

\begin{example}
	Consider a scenario with two type 1 agents $A_1$ and $A_2$ located at the start node $v_s$. Let type-specific tasks for type 1 agents $T_1=\{t_1,t_2\}$ be located at node $A$, which is at a distance of $1$ unit from $v_s$ as shown in Figure \ref{fig:phase3}. Generic tasks $T_0 = \{t_3,t_4\}$ are located at node $B$ which is at a distance of $1$ from $v_s$ as shown in Figure \ref{fig:phase3}.
	
	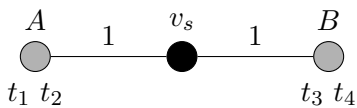
\begin{figure}[htb]
		\begin{center}
			\begin{tikzpicture}
			[scale=.13, inner sep=4pt, minimum size=7pt, auto=center]
			\node [circle, draw, fill=gray!60, label=90:$A$, label=270:$t_1~t_2$](a) at (-15, 6)  {};
			\node [circle, draw, fill=gray!60, label=90:$B$, label=270:$t_3~t_4$](b) at (15, 6)  {};
			\node [circle, draw, fill=black, label=90:$v_s$](d) at (0, 6)  {};
			
			\draw (a) -- (d) node[midway, above]{1} ;
			\draw (d) -- (b) node[midway, above]{1} ;
			
			\end{tikzpicture}
		\end{center}
		\caption{Task locations in Example \ref{ex2}}
		\label{fig:phase3}
	\end{figure}
	In Phase 1, one type-specific task gets allocated to each of the agents, say $t_1$ gets allocated to agent $A_1$ and $t_2$ gets allocated to agent $A_2$. In Phase 2, one generic task gets allocated to each of the agents, say $t_3$ gets allocated to agent $A_1$ and $t_4$ gets allocated to agent $A_2$. So at the end of Phase 2, both agents need to visit nodes $A$ and $B$ to complete their allocated tasks. 
	
	In Phase 3, all tasks allocated in the previous phases to agents of the same type are deallocated and redistributed amongst them. In this phase, tasks $\{t_1,t_2,t_3,t_4\}$ are pooled and the \textit{TaskSplitter} algorithm \textit{SPLITOUR} is run on this set of tasks. Thus, one agent, say $A_1$, gets tasks at node $A$ and the other agents $A_2$ gets tasks at node $B$. In this example, Phase 3 reduces the tour cost from $4$ to $2$ for both the agents, thus reducing the min-max cost by a factor of 2.
	\label{ex2}
\end{example}

\begin{example}
	Consider two agents: $A_1$ of type 1 and $A_2$ of type 2. Task $t_1$ is of type 1 and must be performed by agent $A_1$, and tasks $t_2,~t_3$ are generic tasks that can be done by either of the agents, i.e., $T_1=\{t_1\},~T_2 =\emptyset$, $T_0=\{t_2,t_3\}$. Task $t_1$ is located at node $A$. Tasks $t_2$ and $t_3$ are located at node $B$. Agents start from start node $v_s$. 
	
	\begin{figure}[htb]
		\begin{center}
			\begin{tikzpicture}
			[scale=.13, inner sep=4pt, minimum size=7pt, auto=center]
			\node [circle, draw, fill=gray!60, label=90:$A$, label=270:$t_1$](a) at (-15, 6)  {};
			\node [circle, draw, fill=gray!60, label=90:$B$, label=270:$t_2~t_3$](b) at (15, 6)  {};
			\node [circle, draw, fill=black, label=90:$v_s$](d) at (0, 6)  {};
			
			\draw (a) -- (d) node[midway, above]{1} ;
			\draw (d) -- (b) node[midway, above]{1} ;
			
			\end{tikzpicture}
		\end{center}
		\caption{Task locations in Example \ref{ex3}}
		\label{fig:ex2}
	\end{figure}
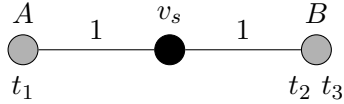
	
	\textsc{CycleSplit} algorithm allocates type-specific task $t_1$ to agent $A_1$ and splits generic tasks $\{t_2,t_3\}$ among agents $A_1$ and $A_2$. Hence, $A_1$ is allocated tasks $\{t_1,t_2\}$ and $A_2$ is allocated $\{t_3\}$. The tour costs for agents $A_1$ and $A_2$ are 4 and 2 respectively, and thus, the min-max tour cost is $\max\{4,2\} = 4$.
	
	\textsc{HeteroMinMaxSplit} algorithm allocates type-specific task $t_1$ to agent $A_1$ in Phase 1. In Phase 2, generic tasks $\{t_2,t_3\}$ are split among agents $A_1$ and $A_2$ based on remaining capacity. For $\lambda = 2$, \textsc{HeteroSplit} allocates $\{t_1\}$ to agent $A_1$ and $\{t_2,t_3\}$ to agent $A_2$. The min-max tour cost for the tour returned by \textsc{HeteroMinMaxSplit} is thus $\max\{2,2\} = 2$.  
	\label{ex3}
\end{example}

In this example, \textsc{HeteroMinMaxSplit} algorithm gives an improvement by a factor of 2 over \textsc{CycleSplit}. The following example extends it to a general case with $k$ agents.

\begin{example}
	Consider the following generalization of the previous example for $k$ agents. Agents $\{A_1, A_2, \ldots, A_k\}$ are located at start node $v_s$. Tasks $T_i=\{t_i\}$ located at node $V_i$ has to performed by $A_i$ for $i = 1, 2, \ldots, k-1$, and tasks $T_0=\{t_k, t_{k+1}, \ldots, t_{2k-1}\}$ located at node $V_k$ are generic tasks and can be completed by any agent. Let $d(\cdot,\cdot)$ denote the distance function. The distance between the start node $v_s$ and any other node is $1$ unit, i.e., $d(v_s,V_i) = 1$, for $1 \leq i \leq k$. The distance between two nodes $V_i$ and $V_j$ is given by $d(V_i,V_j)=d(V_i,v_s)+d(v_s,V_j) = 2$ if $i \neq j$ and zero if $i=j$. In this case, \textsc{CycleSplit} algorithm allocates tasks $\{t_i,t_{k+i-1}\}$ for agents $A_i,~1\leq i \leq k-1$ and task $\{t_{2k-1}\}$ to agent $A_k$. This allocation has a min-max tour cost of $4$. \textsc{HeteroMinMaxSplit} allocates task $\{t_i\}$ to agent $A_i$ for $1 \leq i \leq k-1$ and tasks $\{t_k, t_{k+1}, \ldots, t_{2k-1}\}$ to agent $A_k$. This allocation brings down the min-max tour cost to $2$.
\end{example}

\section{Summary}
In this work, we considered the Heterogeneous Task Allocation Problem (HTAP) where we aim to allocate tasks to heterogeneous agents subject to agent-task compatibility while minimizing the min-max tour cost. We provide two approximation algorithms to solve the Heterogeneous Task Allocation Problem (HTAP). We first propose a $2\beta$ approximation algorithm \textsc{CycleSplit}, where $\beta$ is the approximation factor of the algorithm used by \textsc{CycleSplit} to split tours. We then use the \textsc{CycleSplit} algorithm to develop the \textsc{HeteroMinMaxSplit} algorithm which has an approximation ratio $5 - \frac{2}{k}$, where $k$ is the number of agents available. The \textsc{HeteroMinMaxSplit} algorithm works in three phases and allocates tasks to agents in a ``balanced'' way. An interesting extension to the work would be to study the case of allocation when tasks can be done by different subsets of agents (as opposed to agents of only a certain type).

\bibliographystyle{IEEEtran}
\bibliography{refs}
\end{document}